\theoremstyle{plain}%
\newtheorem{theorem}{Theorem}[section]
\newtheorem{lemma}[theorem]{Lemma}
\theoremstyle{plain}%
\newtheorem*{remark:unnumbered}[theorem]{Remark}%
\newtheorem{definition}[theorem]{Definition}
\newcommand{\myqedsymbol}{\rule{2mm}{2mm}}
\theoremstyle{nonumberplain}%
\newtheorem{proof}{Proof:}%
\providecommand{\emphind}[1]{}%
\renewcommand{\emphind}[1]{\emph{#1}\index{#1}}
\definecolor{blue25emph}{rgb}{0, 0, 11}
\providecommand{\emphic}[2]{}
\renewcommand{\emphic}[2]{\textcolor{blue25emph}{%
      \textbf{\emph{#1}}}\index{#2}}
\providecommand{\emphi}[1]{}%
\renewcommand{\emphi}[1]{\emphic{#1}{#1}}
\definecolor{almostblack}{rgb}{0, 0, 0.3}
\providecommand{\emphw}[1]{}%
\renewcommand{\emphw}[1]{{\textcolor{almostblack}{\emph{#1}}}}%
\providecommand{\emphOnly}[1]{}%
\renewcommand{\emphOnly}[1]{\emph{\textcolor{blue25}{\textbf{#1}}}}
\newcommand{\HLink}[2]{\hyperref[#2]{#1~\ref*{#2}}}
\newcommand{\HLinkSuffix}[3]{\hyperref[#2]{#1\ref*{#2}{#3}}}
\providecommand{\deflab}[1]{}
\renewcommand{\deflab}[1]{\label{def:#1}}
\providecommand{\eqlab}[1]{}%
\renewcommand{\eqlab}[1]{\label{equation:#1}}
\newcommand{\remove}[1]{}%
\newlist{compactenumA}{enumerate}{5}%
\setlist[compactenumA]{topsep=0pt,itemsep=-1ex,partopsep=1ex,parsep=1ex,%
   label=(\Alph*)}%
\newlist{compactenuma}{enumerate}{5}%
\setlist[compactenuma]{topsep=0pt,itemsep=-1ex,partopsep=1ex,parsep=1ex,%
   label=(\alph*)}%
\newlist{compactenumI}{enumerate}{5}%
\setlist[compactenumI]{topsep=0pt,itemsep=-1ex,partopsep=1ex,parsep=1ex,%
   label=(\Roman*)}%
\newlist{compactenumi}{enumerate}{5}%
\setlist[compactenumi]{topsep=0pt,itemsep=-1ex,partopsep=1ex,parsep=1ex,%
   label=(\roman*)}%
\newlist{compactitem}{itemize}{5}%
\setlist[compactitem]{topsep=0pt,itemsep=-1ex,partopsep=1ex,parsep=1ex,%
   label=\ensuremath{\bullet}}%
\numberwithin{figure}{section}%
\numberwithin{table}{section}%
\numberwithin{equation}{section}%
\newcommand{\DD}{\mathcal{D}}
\newcommand{\JJ}{\mathcal{J}}
\newcommand{\MM}{\mathcal{M}}
\newcommand{\PP}{\mathcal{P}}
\newcommand{\HH}{\mathcal{H}}
\newcommand{\LPT}{Alg}
\newcommand{\OPT}{Opt}
\newcommand{\ARI}[1]{\frac{T(\LPT_{#1})}{T(\OPT_{#1})}}
\newcommand{\Red}{R}
\newcommand{\II}{\mathcal{I}}
\DeclareMathOperator*{\argmin}{arg\,min}
\begin{document}

\title{Two Results on LPT: A Near-Linear Time Algorithm
and Parcel Delivery using Drones}

\author[1]{L. Sunil Chandran}
\author[1]{Rishikesh Gajjala}
\author[1,2]{Shravan Mehra}
\author[1]{Saladi Rahul}

\affil[1]{Indian Institute of Science, Bengaluru, India}
\affil[2]{University of Birmingham, UK}

\date{ }

\maketitle

\begin{abstract}
The focus of this paper is to increase our understanding of the {\em Longest Processing Time First (LPT)} heuristic. LPT is a classical heuristic for {the fundamental
problem} of uniform machine scheduling. For different machine speeds, LPT was first considered by Gonzalez et al. {\em(SIAM J. Comput. 6(1):155–166, 1977)}. Since then, extensive work has been done to improve the approximation factor of the LPT heuristic. However, all known implementations of the LPT heuristic take $O(mn)$ time, where $m$ is the number of machines and $n$ is the number of jobs. In this work, we come up with the first {\em near-linear time} implementation for LPT. Specifically, the running time is $O((n+m)(\log^2{m}+\log{n}))$. Somewhat surprisingly, the result is obtained by mapping the problem
to dynamic maintenance of lower envelope of lines, which has been well studied in the computational geometry community.

Our second contribution is to analyze the performance of LPT for the {\em Drones Warehouse Problem (DWP)}, which is a natural generalization of the uniform machine scheduling problem motivated by drone-based parcel delivery from a warehouse. In this problem, a warehouse has multiple drones and wants to deliver parcels to several customers. Each drone picks a parcel from the warehouse, delivers it, and returns  to the warehouse (where it can also get charged). The speeds and battery lives of the drones could be different, and due to the limited battery life, each drone has a bounded range in which it can deliver parcels. The goal is to assign parcels to the drones so that the time taken to deliver all the parcels is minimized. We prove that the natural approach of solving this problem via the LPT heuristic has an approximation factor of $\phi$, where $\phi \approx 1.62$ is the golden ratio.
\end{abstract}

\section{LPT heuristic for uniform scheduling}
Uniform machine scheduling with the minimum makespan objective is a {fundamental} problem. In this problem, we are given a set of $n$ jobs (not necessarily of the same size) and a set of $m$ machines (not necessarily of the same speeds). The goal is to schedule the $n$ jobs {on} $m$ machines so that the time required to execute the schedule (makespan) is minimised. This is an NP-hard problem even for two machines \cite{garey1979computers} of the same speed, but polynomial time approximation schemes (PTASs) are known \cite{HochbaumS87, HochbaumS88}.

A commonly studied heuristic for this problem is the {\em Longest Processing Time First} (LPT) heuristic. In the LPT heuristic, each job is assigned one by one, in non-increasing order of size, so that every job is assigned to a machine where it will be completed earliest (with ties being broken arbitrarily). Note that a machine might already have some jobs assigned to it and the execution of the current job happens only after finishing the already assigned jobs. 

More intricate algorithms were 
designed in the literature to get a good approximation factor for uniform scheduling. For example,
Horowitz and Sahni gave an exact dynamic programming algorithm which runs in exponential time \cite{HorowitzS76}. When there are only two machines, they could build upon this algorithm to obtain a Polynomial-time approximation scheme (PTAS). Later, Hochbaum and Shmoys gave a PTAS when all the machines had identical speeds~ \cite{HochbaumS87}. This was later extended to obtain a PTAS for the uniform scheduling problem (USP)~\cite{HochbaumS88}. 

However, the LPT algorithm 
remains popular in practice 
due to its simplicity and scalability
(compared to the PTAS-type results which
are relatively complicated and have expensive
running time). As a result,
there has been a long line of research on improving the approximation ratio of the LPT algorithm for USP. In two independent works, 
the ratio of the LPT algorithm
was improved by Dobson \cite{Dobson} to $\frac{19}{12}$ and by Friesen \cite{Friesen} to $\frac{5}{3}$. Kovacs \cite{Kovacs10} further improved the approximation factor of the LPT algorithm to $1.58$ and proved that the LPT algorithm cannot give an approximation factor better than $1.54$. 

\section{First result: A near-linear 
time implementation of LPT}
In spite of all the focus in the literature 
on adapting {LPT} to various settings
of machine scheduling and analyzing its approximation factor, to the best of our knowledge, there has
been no work on  fast implementation of LPT.
The known implementation of the LPT heuristic takes $O(mn)$ time (via the naive approach). In this work, we give the first near-linear time implementation of the LPT heuristic. 
\begin{theorem}\label{near_linear_thm}
There is an $O((n+m)(\log^2{m}+\log{n}))$ time 
implementation of the LPT heuristic.   
\end{theorem}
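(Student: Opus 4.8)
The plan is to recast each LPT assignment as a point query on the \emph{lower envelope} of a collection of lines, and then to maintain that envelope under updates using a standard dynamic data structure. First I would sort the $n$ jobs in non-increasing order of size, in $O(n\log n)$ time, giving sizes $p_1 \ge p_2 \ge \cdots \ge p_n$; this is forced because LPT itself processes jobs largest-first. The key observation is this: if machine $i$ has speed $s_i$ and current completion time (load) $L_i$, then assigning a job of size $x$ would raise its completion time to $L_i + x/s_i$. Viewing this as a function of the job size, machine $i$ corresponds to the line $\ell_i(x) = L_i + x/s_i$, of slope $1/s_i > 0$ and intercept $L_i$. The LPT rule---place the current job on the machine where it finishes earliest---is then exactly the instruction to evaluate all $m$ lines at $x = p_j$ and return the line attaining the minimum, i.e.\ a point query on the lower envelope of the lines at abscissa $p_j$.

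The algorithm would maintain the set $\{\ell_1,\dots,\ell_m\}$ in a data structure for the dynamic lower envelope of lines (equivalently, a dynamic convex hull in the dual plane). Initially all loads are $0$, so $\ell_i(x) = x/s_i$ and the envelope simply selects the fastest available machine for the first, largest job---a reassuring sanity check. Processing job $j$ then consists of two operations: (i) query the envelope at $x = p_j$ to obtain the minimizing machine $i^\star$, where the structure must report the identity of the achieving line, not merely its value, so that the assignment can be recorded; and (ii) update $\ell_{i^\star}$ by raising its intercept from $L_{i^\star}$ to $L_{i^\star} + p_j/s_{i^\star}$, realized as a deletion of the old line followed by insertion of the new one. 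Parallel lines (machines of equal speed) need no special handling, since the envelope automatically keeps only the one with the smallest intercept. I note that the query abscissae $p_1 \ge \cdots \ge p_n$ are monotone, but because intercept increases are interleaved with the queries, this is a genuinely \emph{dynamic} problem, so I would not rely on monotonicity and would instead use a fully dynamic structure.

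The main obstacle is precisely that the updates require \emph{deletions} as well as insertions: raising an intercept moves a line upward and can restructure the envelope arbitrarily, so an insertion-only envelope (such as a Li Chao tree or a static sweep of a precomputed envelope) does not suffice. This is what forces the use of a fully dynamic lower-envelope / convex-hull structure, for instance the Overmars--van Leeuwen construction (or a successor such as Brodal--Jacob), which supports insertion, deletion, and point query in $O(\log^2 m)$ time each on a set of $m$ lines. To finish, I would tally the costs: sorting is $O(n\log n)$; initializing the envelope on $m$ lines is $O(m\log^2 m)$; and each of the $n$ jobs incurs one query and one delete--insert pair, i.e.\ $O(\log^2 m)$, for a total of $O(n\log^2 m)$ over all jobs. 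Summing gives $O\bigl(n\log n + (n+m)\log^2 m\bigr)$, which is contained in $O\bigl((n+m)(\log^2 m + \log n)\bigr)$, establishing the theorem.
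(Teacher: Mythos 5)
Your proposal is correct and follows essentially the same route as the paper: the identical reduction of each LPT assignment to a point query on the lower envelope of the lines $h_j(x) = x/v_j + T_j$, the same fully dynamic Overmars--van Leeuwen structure with delete--insert intercept updates, and the same cost tally $O\bigl(n\log n + (n+m)\log^2 m\bigr) \subseteq O\bigl((n+m)(\log^2 m + \log n)\bigr)$. The only (immaterial) difference is that the paper cites an $O(\log m)$ envelope query while you budget $O(\log^2 m)$ per query, which does not change the final bound.
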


For a set of given lines {in 2-
D}, 
the {\em lower envelope} is the point-wise minimum of the lines {(a more formal
definition will follow later)}. In the dynamic maintenance of the lower envelope problem, in each step, a new line is added (or removed), as shown in \cref{lower_envelope_fig}, and one has to maintain the lower envelope with a small update time. This has been well-studied in the computational geometry community. We establish a connection from LPT to the dynamic maintenance of the lower envelope of lines to prove \cref{near_linear_thm} in \cref{near_linear_sec}.

 
\section{Second result: LPT for the
drones warehouse problem (DWP)}

Our second contribution is to analyze the
performance of LPT for the {\em Drones warehouse problem (DWP)} (formally defined in \cref{subsection:dwp}) which is a natural generalization 
of the uniform scheduling problem to drone-based parcel delivery from a warehouse.

{\em Vehicle routing} \cite{truck1,truck2}
is a classical problem in which parcel deliveries are done by a single truck or a collection of trucks. {Researchers have explored variations with different vehicle velocities \cite{vehicle-routing-different-speeds} or scenarios where each parcel can only be delivered by a specific subset of vehicles \cite{vehicle-routing-compatibility}.} With the advent of drones, a generalization of the vehicle routing problem has been studied in the literature in which a truck is carrying drones along with it. The drones pick up parcels from the truck, deliver the package and return to the truck (the truck might have now moved to a different location). This problem is more challenging than the traditional
vehicle routing problem \cite{DBLP:journals/mansci/CarlssonS18}. Several MIP (mixed integer programming) formulations and heuristics have been used to solve this problem \cite{truckuav1,truckuav2}. Theoretical guarantees have also been proved for this problem by Carlsson and Song using geometric methods  \cite{DBLP:journals/mansci/CarlssonS18}.

The transition towards using only drones like in DWP (instead of trucks with drones) is evident in the gradual shift within the research community,
as it is a more sustainable option for the future. Extensive efforts have been dedicated to developing algorithms for scheduling drones under various constraints \cite{UAVALGOBOOK}. 
Some MIP formulations were studied to minimize various objectives like the number of drones \cite{UAV1,UAV2} and bio-inspired algorithms were used to manage large fleets of drones \cite{UAV4}. As drones have a limited battery life, considerations for fuel stations were explored in \cite{UAV3}. Further, the drones might not all have the same features like speed and battery life and this was taken into account in \cite{UAV5}. For a comprehensive review of research in several related problems involving only drones, the reader can refer to the survey presented in \cite{UAV_survey}. There has also been a lot of work by the multi-agent community for scheduling \cite{AAMAS20}, pathfinding \cite{AAMAS22,AAMAS18} and coordinating drones \cite{AAMASA17, DBLP:conf/ijcai/Manoharan20, DBLP:conf/ijcai/BodinCQS18, DBLP:conf/ijcai/WuRC16}.


In this paper, 
we consider the problem where a warehouse wants to use {\em drones} to deliver a large number of parcels to customers around it. 
Due to limited battery life, each drone has a restricted
range around the warehouse in which it can deliver parcels. 
Also, depending on the manufacturer, the speed of each drone 
can vary. We will now formally define the DWP problem, state the result obtained by applying the LPT heuristic for DWP and provide a high-level overview of the analysis. We also provide a detailed literature review on the use of LPT for several other machine scheduling problems and connect it to our result on DWP in \cref{machine_scheduling_overview}.

\subsection{The drones warehouse problem (DWP)}\label{subsection:dwp}

For the sake of better readability, we deviate from the notation used in scheduling literature. The warehouse has a set of $m$ drones 
$\mathcal{D}=\{D_1,D_2\cdots D_m\}$ and a set of 
$n$ parcels $\mathcal{P}=\{P_1,P_2\cdots P_n\}$. The parcels in set $\mathcal{P}$ need to be delivered by drones in 
set $\mathcal{D}$. Each drone can pick up one parcel at a time from the warehouse, deliver it and return to the warehouse. 
For all $1\leq j\leq n$, let the distance at which parcel $P_j$ needs to be delivered be $\ell_j/2$. So each drone must travel a distance of $\ell_j$ in total to deliver the parcel $P_j$ and 
come back to the warehouse.

Additionally, the drones have a limited battery life which can be different for each drone. Let $d_i$ be the distance which drone $D_i$ can travel,
for all $1\leq  i \leq m$. Therefore, for a parcel $P_j$ to be delivered by a drone $D_i$, it must be the case that $\ell_j \leq d_i$. The speed at which the drone $D_i$ travels is $v_i$, for all $1\leq i\leq m$. After each delivery, the drone recharges its battery in the warehouse, and 
for the sake of simplicity we assume the time taken to recharge is negligible. Our goal is to assign each parcel to a drone such that the time taken to execute the schedule is minimised.


More precisely, we define a \textit{valid schedule} $f: \DD \rightarrow 2^{\PP}$ (power set of $\PP$) such that the following properties hold
\begin{itemize}
    \item Each parcel is assigned to exactly one drone, i.e., $f(D_i) \cap f(D_j) = \emptyset$ for all $i\neq j$ and $\bigcup_{D\in \DD} f(D) = \PP$.
    \item Each drone must be able to deliver the parcel assigned to it, i.e., $\ell_i \leq d_j$ for all $j\in [m]$ and $i: P_i \in f(D_j)$.
\end{itemize}
Our goal is to find a \textit{valid schedule} such that $T(f)$ is minimised where 
$$T(f) = \max\limits_{j\in[m]}\sum_{i: P_i \in f(D_j)} \frac{\ell_i}{v_j}$$
We assume that there is at least one drone capable of delivering all parcels. Otherwise, there would be no valid solution (this can be checked in linear time).


\subsection{Our results and techniques} 
We implement the LPT algorithm with additional battery life constraints to solve DWP in near-linear time. Our key contribution is to prove that this algorithm always returns a solution which has delivery time at most $\phi$ times the optimal solution, where $\phi \approx 1.62$ is the golden ratio. We summarize our results and compare them with previous work in \cref{Fig:1} (discussed in more detail in \cref{machine_scheduling_overview})

\begin{theorem}\label{thm:phi_approx}
There is a $\phi$-approximation algorithm for the DWP problem where $\phi = \frac{1 + \sqrt{5}}{2}$ is the golden ratio. The algorithm runs in $O((n+m)(\log^2{m}+\log{n}))$ time, where $m$ and $n$ are the number of drones and parcels, respectively. 
\end{theorem}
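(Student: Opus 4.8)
The claimed algorithm is nothing more than LPT with the extra rule that a parcel may only be placed on a drone whose range permits it, so the proof naturally divides into the running-time claim, which I would inherit from \cref{near_linear_thm}, and the approximation guarantee, which is the real content.

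\textbf{Running time.} I would reuse the dynamic lower-envelope machinery behind \cref{near_linear_thm}. Map each drone $D_i$ to the line $x \mapsto \mathrm{load}_i + x/v_i$, whose value at $x=\ell_j$ is exactly the completion time of parcel $P_j$ if it were assigned to $D_i$; placing $P_j$ is then a lower-envelope query at $x=\ell_j$, and updating the chosen drone's load is a delete-and-reinsert of its line. The only new ingredient is eligibility: $D_i$ may take $P_j$ only if $d_i \ge \ell_j$. Processing parcels in non-increasing size order makes eligibility \emph{monotone} --- once a drone becomes eligible it stays eligible --- so each drone's line is inserted exactly once, at the moment the current parcel size first drops to $\le d_i$ (found by merging the parcels sorted by size with the drones sorted by range). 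The whole run therefore uses $O(n+m)$ envelope operations, matching the accounting in \cref{near_linear_thm} and giving the stated bound.

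\textbf{Approximation ratio.} Normalise so that $T(\OPT)=1$ and suppose for contradiction that some instance has $T(\LPT) > \phi$; take such a counterexample with the fewest parcels. First I would argue that the smallest parcel $P_n$ (the last in LPT order) sits on the critical drone $D_c$: otherwise deleting $P_n$ leaves the LPT makespan unchanged while not increasing the optimum, producing a smaller counterexample. I would also discard every drone of range $d_i < \ell_n$, since such a drone cannot carry \emph{any} parcel and so appears in no valid schedule. After this cleanup every remaining drone is eligible for $P_n$, so the battery constraint is inactive at the critical assignment --- this is the key reduction that lets a related-machines LPT analysis go through. Writing $s:=\ell_n$, the greedy choice gives, for every drone $j$, a final load at least $T(\LPT)-s/v_j$; weighting by $v_j$ and summing, the total work $W=\sum_i \ell_i$ obeys $W \ge T(\LPT)\sum_j v_j - ms$, which against the optimum work bound $W \le \sum_j v_j$ forces $s \ge (T(\LPT)-1)\,\overline v$ with $\overline v$ the average speed. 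I would then split on the number of parcels on $D_c$: if it carries at least two, the earlier one has size $\ge s$, so $T(\LPT)\ge 2s/v_c$ and $s$ is simultaneously bounded \emph{above} through the optimum's placement of $P_n$ and of the large parcels, while a single parcel on $D_c$ admits an easier estimate. Balancing the lower and upper bounds on $s$ yields a quadratic whose relevant root satisfies $\phi^2=\phi+1$, contradicting $T(\LPT)>\phi$.

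\textbf{Main obstacle.} The delicate part is the second half of the case analysis: converting the optimum's constraints into an upper bound on $s$ that meets the work-bound lower bound \emph{exactly} at $\phi$, while checking that the now-monotone battery constraints never weaken either the greedy inequality or the optimum work bound. Confirming that no configuration of speeds, ranges, and parcel sizes pushes the ratio past $\phi$ --- and that $\phi$, rather than a smaller constant, is genuinely forced --- is where I expect the argument to require the most care.
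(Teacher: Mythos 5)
Your running-time argument and the first half of your approximation argument track the paper closely: the monotone-eligibility insertion of drone lines into the lower-envelope structure is exactly the paper's implementation (\cref{near_linear_sec} and \cref{alg: MLPT}), and your normalization, minimal counterexample, removal of drones with $d_i < \ell_n$, and the observation that every surviving drone is eligible for $P_n$ --- so the greedy inequality $T_j + \ell_n/v_j \geq T(\LPT)$ holds for \emph{all} $j$ --- correspond to \cref{simplify} and \cref{best_lemma}. The gap is in your second half. You sum the greedy inequality over drones to get $s \geq (T(\LPT)-1)\,\overline{v}$ and then claim that the optimum's constraints give a matching upper bound on $s$, with the two bounds ``balancing'' at a quadratic whose root is $\phi$. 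That balancing step is asserted, never derived, and there is a structural reason to doubt it can be derived: on machines of different speeds, the averaging lower bound involves the \emph{average} speed $\overline{v}$, while any upper bound on $s$ obtained from the critical drone (e.g.\ $2s \leq T(\LPT)\, v_c$ when $D_c$ carries two parcels) involves the critical drone's speed $v_c$, and these two quantities are unrelated --- an adversary can make $v_c$ large or small relative to $\overline{v}$. This is precisely why Graham-style averaging, which works for identical machines, stalls on uniform machines, and why the literature needed increasingly intricate analyses ($2$ by Gonzalez et al., $19/12$ by Dobson, $5/3$ by Friesen, $1.58$ by Kovacs) rather than a two-case work-bound argument. (A minor arithmetic slip in the same step: summing $L_j(\LPT_0) \geq T(\LPT)v_j - s$ gives $W - s \geq T(\LPT)\sum_j v_j - ms$, i.e.\ $W \geq T(\LPT)\sum_j v_j - (m-1)s$, though this does not affect the main issue.)

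The paper's proof closes this gap by an entirely different mechanism (Ideas 3 and 4 in \cref{MLPT}): parcels are classified as small $[1,\phi)$, medium $[\phi,2)$, or large $[2,\infty)$, and a rounding function $\Red$ with $x \geq \Red(x) \geq x/\phi$ maps every parcel to a half-integer. The heart of the argument is the \emph{per-drone} dominance \cref{main}, $L'_i(\OPT) \leq L'_i(\LPT_0)$ for every $i$, proved by a parity argument: if it failed, the squeeze $L'_j(\LPT_0) < L'_j(\OPT) \leq v_j < L'_j(\LPT_0) + \phi - 1$ from \cref{phi} forces $L'_j(\OPT) = L'_j(\LPT_0) + 0.5$, and a case split on whether the half-integer $L'_j(\LPT_0)$ is integral (using the medium-job bound of \cref{half}) yields a contradiction in each case. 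Summing over drones then contradicts conservation of rounded work, since $\LPT_0$ delivers one fewer parcel than $\OPT$. Note where $\phi$ actually enters: through the rounding guarantee $\Red(x) \geq x/\phi$ and the identity $1/\phi = \phi - 1$ that powers \cref{phi}, not through balancing bounds on the smallest parcel size. So while your setup is sound, the ``quadratic balancing'' you propose is the missing idea rather than a routine verification, and the paper's discretization-plus-parity argument is what would need to replace it.
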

At a high level, our proof is inspired by the analysis {of the LPT algorithm} for the uniform scheduling problem (USP) \cite{Kovacs10}. However, the constraint of battery life makes our problem significantly more challenging. As the total distance travelled by the drones  is the same for any valid schedule, if some drone in the LPT algorithm travels more distance than its counterpart in the optimal assignment, then some other drone in the LPT algorithm must travel lesser distance than its counterpart in the optimal assignment as a compensation. However, if we assume that the approximation factor is greater than $\phi$, we can create instances for which such compensation does not occur, which leads to a contradiction. The proof requires ideas such as
(a) working with a minimal instance, (b) removing the parcel which is closest to the warehouse from the schedule, (c) classifying the parcels into three categories based on their distance and 
(d) truncating the parcel distances. We give the complete proof in~\cref{MLPT}.






\section{Related work on machine scheduling}
\label{machine_scheduling_overview}
We will now give a detailed literature review of the use of LPT for machine scheduling and connect our problem DWP with the other variants.

\begin{figure*}[t]
    \centering

\begin{tikzpicture}

\tikzstyle{node} = [rectangle, draw, rounded corners, minimum width=1.5cm, minimum height=1cm, align=center]
\tikzstyle{approx} = [font=\small, align=center]

\node[node] (ISP) at (0,0) {ISP};
\node[node] (USP) at (3,0) {USP};
\node[node] (DWP) at (6,0) {DWP};
\node[node] (USP-C) at (9,0) {USP-C};
\node[node] (UnrelSP) at (12,0) {UnrelSP};

\node[approx] at (0,-1.2) {$=4/3$ \\ \footnotesize{\cite{Graham1}}
};
\node[approx] at (3,-1.2) {$[1.54,1.58]$ \\ \footnotesize{\cite{Kovacs10}}
};
\node[approx] at (6,-1.2) {$[1.54,\phi]$ \\ 
\footnotesize{[This work]}
};
\node[approx] at (9,-1.2) {$\geq 2$ \\ \footnotesize{[Folklore]}
};
\node[approx] at (12,-1.2) {Not Applicable \\ 
};

\node at (1.5,0) {$\subset$};
\node at (4.5,0) {$\subset$};
\node at (7.5,0) {$\subset$};
\node at (10.5,0) {$\subset$};

\end{tikzpicture}

    \caption{Landscape of the approximation ratio of the LPT heuristic for machine scheduling problems and our 
    drone warehouse problem (DWP).
    The relation $A \subset B$ in the figure implies that $A$ is a
    special case of $B$. Therefore, the approximation
    factor increases from left to right in the figure. The interval $[a,b]$ means that the approximation ratio of the LPT heuristic is at least $a$ and at most $b$. As there is no total order among jobs in UnrelSP, LPT is not applicable for UnrelSP.    
   } 
    \label{Fig:1}
\end{figure*}
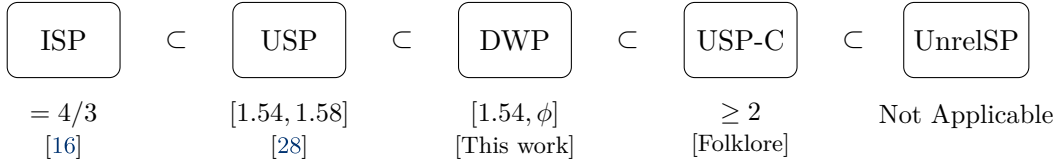

\subsection{{Uniform machines
scheduling problem} (USP)}
{In the {\em uniform 
machines scheduling problem},} let $\PP = \{P_1, P_2, \ldots, P_n\}$ be a set of $n$ jobs of size $\{\ell_1, \ell_2, \ldots, \ell_n\}$ respectively and $\DD = \{D_1, D_2, \ldots, D_m\}$ be a set of $m$ machines of speed $\{v_1, v_2, \ldots, v_m\}$ respectively. Our goal is to schedule the $n$ jobs to the $m$ machines so that the completion time of the schedule is minimised. We will now formally define the problem.

We define a \textit{schedule} as a function $f: \DD \rightarrow 2^{\PP}$ (power set of $\PP$ ) where $f(D_i)$ represents the set of all jobs assigned to machine $D_i$. We call a \textit{schedule} a \textit{valid schedule} if each job is assigned to exactly one machine, i.e., $f$ is a \textit{valid schedule} if and only if $f(D_i)\cap f(D_j) = \emptyset$ for all $i\neq j$ and $\bigcup_{D\in \DD} f(D) = \PP$. Each schedule $f$ has an associated completion time 
$$T(f) = \max\limits_{j\in[m]}\sum_{i: P_i \in f(D_j)} \frac{\ell_i}{v_j}$$
Our goal is to find a \textit{valid schedule} $f$ such that $T(f)$ is minimised.

\subsection{ISP $\subset$ USP $\subset$ DWP} 
The special case of USP when all the machines have equal speed is called the {\em identical-machines scheduling problem (ISP)}. Therefore, we denote
ISP $\subset$ USP, where the notation $A \subset B$ means that 
$A$ is a special case of $B$ (See \cref{Fig:1}) and therefore, a lower bound of $A$ is also a lower bound for $B$ and an upper bound of $B$ is also an upper bound for $A$. Graham \cite{Graham1,Graham2} proved that the approximation factor of the LPT algorithm is $\frac{4}{3}$ for ISP. For USP, after a series of works, Kovacs \cite{Kovacs10} proved that the approximation factor of the LPT algorithm is at most $1.58$ and at least $1.54$.  It is easy to see that USP is a special case of DWP (USP $\subset$ DWP) when all battery lives are large enough to deliver all parcels. So, the approximation factor of LPT for DWP can not be better than $1.54$. On the other hand, due to \cref{thm:phi_approx}, we get that the LPT heuristic is a $\phi$-approximation, which is one of the main contributions of this work. 
\subsection{DWP $\subset$ USP-C $\subset$ UnrelSP.} \label{2approx_proof}
A lot of work in the literature has been devoted to a generalization of USP,
namely uniform scheduling problems with {\em processing constraints} (USP-C)
\cite{LEUNG2008251}. We are given a set of jobs $\JJ = \{J_1, J_2, \ldots, J_n\}$ and a set of machines $\MM$, where each job $J_j$ has a set of machines $\MM_j \subseteq \MM$ to which they can be assigned to. Different structural 
restrictions of $\MM_j$ lead to different models in USP-C, like the inclusive processing set \cite{ou2008, DBLP:journals/eor/LiW10a}, nested processing set \cite{MURATORE201047, EPSTEIN2011586}, interval processing set \cite{SHABTAY2012405, KARHI2014155}, and tree-hierarchical processing set restrictions \cite{DBLP:journals/jors/LiL15, DBLP:journals/jors/LiL16}. The goal is to assign each job to a machine to minimize the completion time. 

DWP is also a special case of USP-C as each parcel can only be delivered by a subset of drones determined by its battery life. Therefore, DWP $\subset$ USP-C. We emphasise that these two classes are strictly different (and DWP is also different from all known variants of USP-C, including nested intervals). Consider two machines and two jobs. Let the size of $J_1, J_2$ be $10,10+\epsilon$ respectively and the speeds of $M_1,M_2$ be $10,10+\epsilon$ respectively for $\epsilon > 0$. Moreover, assume that the job $J_1$ can only be done by $M_2$, but $J_2$ can be done by both $M_1$ and $M_2$. The LPT heuristic would take $\frac{20+\epsilon}{10+\epsilon}$ time, while the optimum assignment would take only $\frac{10+\epsilon}{10}$ time. This gives us an approximation ratio of $2$ as $\epsilon$ approaches zero in this example {(whether this ratio is optimal or not for the LPT heuristic on USP-C is an open question)}. We also note that this is not a valid lower bound instance for DWP as any drone which can do a job at a distance of $10+\epsilon$ can also do the job at a distance of $10$. 



The unrelated Scheduling Problem (UnrelSP) is the scheduling problem in which the time taken by machine $D\in \DD$ to complete job $P\in \PP$ is determined by an arbitrary function $f: \DD\times \PP \rightarrow \mathbb{R}$. Note that USP-C $\subset$ UnrelSP and furthermore the LPT heuristic is not applicable for UnrelSP as there is no total order among the jobs to sort. This finishes the description of \cref{Fig:1}.

\subsection{Other algorithms and optimization measures.}
A PTAS for ISP was given by Hochbaum and Shmoys~\cite{HochbaumS87}. This was later extended to obtain a PTAS for USP~\cite{HochbaumS88}. Due to the seminal result of Lenstra, Shmoys and Tardos \cite{2approxgen}, there is an LP-based 2-approximation algorithm for UnrelSP. This is also the best known approximation algorithm for USP-C. For a special case of USP-C (including DWP) with nested intervals, there is $4/3$ approximation algorithm \cite{DBLP:journals/eor/LeungN17}. For more results on USP-C, we refer the reader to the latest survey \cite{LEUNG20161}. We emphasise that despite knowing PTASs and other algorithms with a better approximation ratio, the LPT heuristic remains popular in practice 
due to its simplicity and scalability. This motivated researchers from the algorithms and operations research community to extensively study it as described in \cref{Fig:1}. 


Instead of optimizing the time taken to complete all jobs (makespan), other objectives like the average completion time \cite{DBLP:journals/cacm/BrunoCS74}, weighted-average completion time \cite{HorowitzS76} and monotonicity and truthfulness have also been studied \cite{mon1, mon4, mon3, mon2}.

\section{Near linear time implementation}\label{near_linear_sec}

\subsection{Longest processing time first (LPT)}
A classical approach for the Uniform Scheduling problem is using a greedy algorithm called LPT scheduling which gives a $1.58$-approximate solution \cite{Kovacs10}. First, we sort the jobs $\PP$ in decreasing order of their size 
{and let $P_1,P_2,\ldots, P_n$ be 
the sorted sequence}. Then we initialise $T_j$, the time taken by the $j$th machine to be zero for all $j \in [m]$. Now we assign the jobs sequentially from $P_1$ to $P_n$. We assign job $P_i$ to a machine $D_j$ for which the value of $T_j + (\ell_i/v_j)$ is minimum and we update the value $T_j$ to $T_j + (\ell_i/v_j)$ for this specific $j$ (see \cref{alg: LPT_slow}).

\begin{algorithm}
\caption{LPT algorithm in $O(nm)$ time}
\label{alg: LPT_slow}
\begin{algorithmic}
\State \textbf{Input: }List of jobs $\PP$ and machines $\DD$.
\State \textbf{Output: }Time required for LPT scheduling.
\State Sort $\PP$ in non-increasing order of size.
\State Initialise $T_j = 0$ for all $j \in [m]$
\For{$i$ in $\{1\ldots n\}$}
    \State $\alpha = \argmin_{j\in[m]} \left(T_j + \frac{\ell_i}{v_j}\right)$
    \State $T_{\alpha} \leftarrow T_{\alpha} + \frac{\ell_i}{v_{\alpha}}$
\EndFor
\State \textbf{return} $\max\limits_{j \in [m]} T_j$
\end{algorithmic}
\end{algorithm}

As $\argmin_{j\in[m]} T_j + \frac{\ell_i}{v_j}$ can be found in $O(m)$ time, we can implement the above algorithm to run in $O(nm)$ time. To improve the run time of this algorithm, we implement a faster way to find $\argmin_{j\in[m]} T_j + \frac{\ell_i}{v_j}$.

\subsection{Dynamic Lower Envelope}

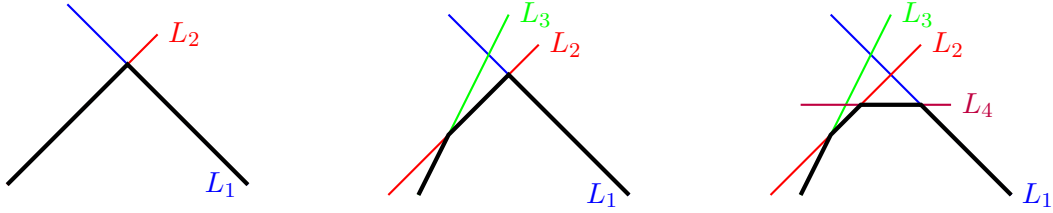
\begin{figure}
\begin{minipage}{0.3\textwidth}

\begin{tikzpicture}[scale=0.4]

\draw[thick, blue] (0, 4) -- (6, -2) node[anchor=east] {$L_1$};
\draw[thick, red] (-2, -2) -- (3, 3) node[anchor=west] {$L_2 $};

\draw[ultra thick, black] (-2, -2) -- (2, 2) -- (6, -2);



\end{tikzpicture}
\end{minipage}
\begin{minipage}{0.3\textwidth}

\begin{tikzpicture}[scale=0.4]

\draw[thick, blue] (0, 4) -- (6, -2) node[anchor=east] {$L_1$};
\draw[thick, red] (-2, -2) -- (3, 3) node[anchor=west] {$L_2$};
\draw[thick, green] (-1, -2) -- (2, 4) node[anchor=west] {$L_3$};

\draw[ultra thick, black] (-1, -2) -- (0, 0) -- (2, 2) -- (6, -2);



\end{tikzpicture}
\end{minipage}
\begin{minipage}{0.3\textwidth}

\begin{tikzpicture}[scale=0.4]

\draw[thick, blue] (0, 4) -- (6, -2) node[anchor=west] {$L_1$};
\draw[thick, red] (-2, -2) -- (3, 3) node[anchor=west] {$L_2$};
\draw[thick, green] (-1, -2) -- (2, 4) node[anchor=west] {$L_3$};
\draw[thick, purple] (-1, 1) -- (4, 1) node[anchor=west] {$L_4$};

\draw[ultra thick, black] (-1, -2) -- (0, 0) -- (1, 1) -- (3,1) -- (6, -2);



\end{tikzpicture}
\end{minipage}
   \caption{Dynamic lower envelope of lines $L_1: y=-x+4$, $L_2: y = x $, $L_3: y = 2x$, $L_4: y = 1$}
   \label{lower_envelope_fig}
\end{figure}

One can visualise the step in which $\argmin_{j\in[m]} T_j + \frac{\ell_i}{v_j}$ is computed in the following way: Consider the $m$ linear functions $h_1, h_2, \ldots, h_m$ where $h_j(x) =\dfrac{1}{v_j}\cdot x+T_j$. We need to find the index of $j \in [m]$ for which $h_j(x) =\dfrac{1}{v_j}\cdot x+T_j$ is minimum at $x=l_i$. This is exactly the same as finding the line in the lower envelope for the $h_1, h_2 \cdots h_m$ at $x=l_i$. Our idea now is to maintain a dynamic lower envelope data structure capable of inserting and deleting functions. We will now describe this formally.

Let $\HH$ be a set of functions where $h\in \HH$ is of the form $h: \mathbb{R}\rightarrow \mathbb{R}$. Then the lower envelope is the function $h_{min}:\mathbb{R}\rightarrow \mathbb{R}$ such that $h_{min}(x) = \min_{h\in \HH} h(x)$. We are interested in the case when these functions correspond to straight lines, i.e., they are of the form $h_i(x) = m_ix + c_i$. More particularly, we are interested in the problem of dynamically maintaining the lower envelope problem of lines, i.e., in each step, a new line is added (or removed), as shown in \cref{lower_envelope_fig}, and one has to maintain the lower envelope with a small update time. This has been well-studied in the computational geometry community. From 
\cite{DBLP:journals/jcss/OvermarsL81}, there is a data structure to maintain a dynamic lower envelope of lines (further extended to dynamic lower envelope of pseudo lines in \cite{DBLP:conf/compgeom/AgarwalCHM19}):
\begin{itemize}
    \item $\HH.Insert(h)$: Adds the linear function $h$ to set $\HH$ in $O(\log^2{|\HH|})$ time.
    \item $\HH.Delete(h)$: Removes the function $h$ from set $\HH$ in $O(\log^2{|\HH|})$ time.
    \item $\HH.LowerEnvelope(x)$: Returns $h^* = \argmin_{h\in \HH} h(x)$ in $O(\log{|\HH|})$ time.
\end{itemize}

\begin{algorithm}
\caption{LPT algorithm in $O((n+m)(\log^2{m}+\log{n})$ time}
\label{alg: LPT_fast}
\begin{algorithmic}
\State \textbf{Input: }List of jobs $\PP$ and machines $\DD$.
\State \textbf{Output: }Time required for LPT scheduling.
\State Sort $\PP$ in non-increasing order of size.
\State Initialise lower envelope data structure $\HH$
\For{$j$ in $\{1\ldots m\}$}
    \State $h_j(x) = \frac{1}{v_j}x$
    \State $\HH.Insert(h_j)$
\EndFor
\For{$i$ in $\{1\ldots n\}$}
    \State $h_j = \HH.LowerEnvelope(\ell_i)$
    \State Let $h_j(x) = \frac{1}{v_j}x + T_j$
    \State Set $h_j'(x) = \frac{1}{v_j}x + T_j + \frac{\ell_i}{v_j}$
    \State $\HH.Delete(h_j)$
    \State $\HH.Insert(h_j')$
\EndFor
\State \Return $\max\limits_{j\in [m]} h_j(0)$
\end{algorithmic}
\end{algorithm}

Using this, we can implement the LPT algorithm in $O((n+m))\log^2{m})$ time (\cref{alg: LPT_fast}). We initialise the lower envelope data structure $\HH$ and store lines $h_j(x) = \frac{1}{v_j}x + 0$ (initially $T_j = 0$) for all $j\in [m]$. To assign parcel $P_i$ to drone $D_j$, we remove the line $h_j(x) = \frac{1}{v_j}x + T_j$ from $\HH$ and replace it with $h_j'(x) = \frac{1}{v_j}x+ T_j + \frac{\ell_i}{v_j}$. To find out which drone $P_i$ is assigned to, we have to find $\argmin_{j\in[m]} T_j + \frac{\ell_i}{v_j}$, which is the same as querying $\HH.LowerEnvelope(\ell_i)$.

Observe that sorting $\PP$ takes $O(n\log{n})$ time. We have called $\HH.Insert(\cdot)$ $O(n+m)$ times and $\HH.Delete(\cdot)$ $O(n)$ times. Therefore, the runtime of the algorithm is $O((n+m)\log^2{m} + n\log{n})$ or $O((n+m)(\log^2{m} + \log{n}))$



\section{$\phi$-approximation for the Drone Warehouse Problem}\label{MLPT}

In this section, we will prove \cref{thm:phi_approx}.

\subsection{Algorithm}
We use an algorithm similar to LPT but modify it slightly so that the battery constraints are respected.
First, we sort the parcels $\PP$ in non-increasing order of their distance. Then, we initiate $T_j$, the time taken by the $j$th drone to be zero for all $j\in [m]$. We now assign the parcels sequentially from $P_1$ to $P_n$. We assign parcel $P_i$ to a drone $D_j$ for which $\ell_i \leq d_j$ and the value of $T_j + (\ell_i/v_j)$ is minimum. We update the value $T_j$ to $T_j+
(\ell_i/v_j)$ (see ~\cref{alg: MLPT}). 
It is easy to see that the running time of the LPT algorithm is $O(mn + n\log{n})$

\subsection{Implementation}
Let us sort all the drones in decreasing order of their battery life. Observe that if drone $D_j$ is capable of delivering parcel $P_i$, then all drones $D_{j'}, j' \leq j$ are capable of delivering parcel $P_i$. Therefore, we can represent all the drones capable of delivering $P_i$ by a pointer $ptr$ so that all drones $D_j, j\in \{1, \ldots, ptr\}$ can deliver parcel $P_i$. Also, as the parcels are sorted in decreasing order of distance, the value of $ptr$ will only increase in each iteration (as any drone capable of delivering $P_i$ can also deliver $P_{i'}, i'>i$). We again use the lower envelope data structure to implement LPT. (see \cref{alg: MLPT}).

\begin{algorithm} [H]
\caption{LPT algorithm for DWP}
\label{alg: MLPT}
\begin{algorithmic}
\State \textbf{Input: }List of drones $\DD$ and parcels $\PP$.
\State \textbf{Output: }Minimum time required to deliver all parcels.
\State Sort $\PP$ in non-increasing order of distance.
\State Sort $\DD$ in non-increasing order of battery life
\State Initialise lower envelope data structure $\HH$
\State $ptr \leftarrow 0$
\For{$i$ in $\{1\ldots n\}$}
    \While{$ptr < n$ and $d_{ptr+1} \geq \ell_i$}
    \State $ptr \leftarrow ptr + 1$
    \State $h_{ptr}(x) = \frac{1}{v_{ptr}}x$
    \State $\HH.Insert(h_{ptr})$
    \EndWhile
    \State $h_j = \HH.LowerEnvelope(\ell_i)$
    \State Let $h_j(x) = \frac{1}{v_j}x + T_j$
    \State Set $h_j'(x) = \frac{1}{v_j}x + T_j+\frac{\ell_i}{v_j}$
    \State $\HH.Remove(h_j)$
    \State $\HH.Insert(h_j')$
\EndFor
\State \Return $\max\limits_{j\in [m]} h_j(0)$
\end{algorithmic}
\end{algorithm}

Observe, that sorting $\PP$ and $\DD$ takes $O(n\log{n} + m\log{m})$ time. We call $\HH.Insert(\cdot)$ $O(n+m)$ times and $\HH.Delete(\cdot)$ $O(n)$ times. As these function calls only use $O(\log^2{m})$ time, the above algorithm runs in $O((n+m)(\log^2{m}+ log{n}))$ time.

\subsection{Proof for $\phi$-approximation}


\paragraph{Simplifying steps.}
Assume that the parcels are sorted in decreasing order of distance, i.e., if $i < j$ then $\ell_i \geq \ell_j$ for all $i,j \in [n]$. Let $\LPT_I$ represent the \textit{valid schedule} obtained from the LPT-algorithm and let $\OPT_I$ represent some fixed optimal \textit{valid schedule} on instance $I = \{\DD, \PP\}$. We show that $\ARI{I} \leq \phi$. 
We will start by 
performing some simplifying steps on $I$.
\begin{lemma}\label{simplify}
For any instance $I=\{\mathcal{D},\mathcal{P}\}$, we
can assume that $\ell_n=1$, 
$T(Opt_{I})=1$ and no drone 
has battery life less than $\ell_n$.
\end{lemma}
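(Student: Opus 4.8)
The plan is to prove each of the three normalizations by exhibiting an invertible transformation of the instance that leaves the ratio $\ARI{I}$ unchanged, so that proving $\ARI{I}\le\phi$ on the normalized instances suffices for all instances. Two of these are scaling symmetries (of lengths and of speeds) and the third is a reduction that discards useless drones. The key recurring point is that the LPT algorithm of \cref{alg: MLPT} produces the \emph{same} assignment before and after each transformation, so that $T(\LPT_I)$ and $T(\OPT_I)$ transform identically.

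First I would normalize $\ell_n=1$ by a uniform scaling of lengths. Given $I$, set $c=1/\ell_n$ and form $I'$ by replacing every parcel distance $\ell_i$ with $c\ell_i$ and every battery life $d_j$ with $c\,d_j$, leaving the speeds untouched. Since both sides of each feasibility constraint $\ell_i\le d_j$ are multiplied by $c$, a schedule is valid for $I$ iff it is valid for $I'$; and since $T(f)=\max_{j}\sum_{i:P_i\in f(D_j)}\ell_i/v_j$ scales by exactly $c$, we get $T(\OPT_{I'})=c\,T(\OPT_I)$. The only thing needing care is that LPT behaves identically on $I'$: the sort order of parcels by distance is preserved (scaling is monotone), the pointer $ptr$ advances at the same steps (each test $d_{ptr+1}\ge\ell_i$ becomes $c\,d_{ptr+1}\ge c\,\ell_i$), and at every step each candidate key $T_j+\ell_i/v_j$ is multiplied by the common factor $c$, so every $\argmin$ decision, and hence the entire assignment, is unchanged. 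Thus $T(\LPT_{I'})=c\,T(\LPT_I)$ and $\ARI{I'}=\ARI{I}$, now with $\ell_n=1$.

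Second I would normalize $T(\OPT_I)=1$ by a uniform scaling of speeds, which is independent of the first step because it alters no distance or battery value, so $\ell_n$ stays equal to $1$. Replacing every $v_j$ by $s\,v_j$ divides every completion time by $s$, and since $T_j+\ell_i/(s\,v_j)=(1/s)\bigl(T_j^{\mathrm{old}}+\ell_i/v_j\bigr)$ the $\argmin$ decisions are again unchanged; hence both $T(\LPT)$ and $T(\OPT)$ are divided by $s$. Choosing $s=T(\OPT_I)$ gives $T(\OPT_I)=1$ while preserving the ratio. For the third assumption, observe that once $\ell_n=1$ every parcel has $\ell_i\ge \ell_n=1$, so a drone $D_j$ with $d_j<\ell_n$ satisfies $d_j<\ell_i$ for every $i$ and therefore fails the feasibility constraint for every parcel: LPT never inserts it into $\HH$ and $\OPT$ never uses it. Deleting all such drones changes neither $T(\LPT_I)$ nor $T(\OPT_I)$, and the standing assumption that some drone can deliver the farthest parcel guarantees the reduced instance is still feasible. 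Hence we may assume no drone has battery life below $\ell_n$.

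I expect the only genuine content, as opposed to bookkeeping, to be the invariance of the LPT output under the two scalings: one must verify that a monotone rescaling of all the greedy keys preserves every decision, including the tie-breaking (which we may fix to be identical across $I$ and its rescaled copy). Everything else follows immediately from the closed form of $T(f)$ and the multiplicative structure of the feasibility constraints.
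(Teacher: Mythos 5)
Your proposal is correct and follows essentially the same route as the paper: scale all distances and battery lives by $\ell_n^{-1}$, scale all speeds to force $T(\OPT)=1$, and discard drones with $d_j<\ell_n$ since they are empty in every valid schedule; your extra verification that the greedy $\argmin$ decisions (and the pointer advances) are invariant under these scalings is a welcome rigorization of what the paper merely asserts. The only nitpick is notational: in the second step you should take $s$ to be the optimum of the already length-rescaled instance, i.e.\ $s=T(\OPT_I)/\ell_n$, matching the paper's choice $\beta=\ell_n^{-1}T(\OPT_I)$.
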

\begin{proof}
Observe that scaling all values $\{\ell_i\}_{i\in [n]}, \{d_j\}_{j\in [m]}$ by some constant $\alpha$ scales $T(\LPT_I)$ and $T(\OPT_I)$ by $\alpha$. Similarly, scaling all values $\{v_j\}_{j \in [m]}$ by some constant $\beta$ scales $T(\LPT_I)$ and $T(\OPT_I)$ by $\beta^{-1}$. However, this procedure does not affect the value of the approximation factor $\ARI{I}$. Therefore, we can choose values $\alpha, \beta$ such that $\ell_n = 1$ and $T(\OPT_I) = 1$  (choosing $\alpha = \ell_n^{-1}, \beta = \ell_n^{-1}T(\OPT_I)$ gives us the desired result). Also, as $P_n$ is the smallest job, we can remove all drones which do not have enough battery life to deliver it as such drones would be empty in any schedule. 
\end{proof}
\subsubsection{Idea-1: Working with 
minimal instances.} Our goal is to prove that 
$T(Alg_I) \leq \phi$ for all instances $I$. Towards a contradiction, assume that there exists an instance $I$ for which $\ARI{I} > \phi$. Among all such contradicting  instances, let $\II$ be a contradicting instance which has the {\em minimum} number of parcels. 
We will sometimes drop the subscripts in $\LPT_\II$ and $\OPT_\II$ and write them as $\LPT$ and $\OPT$, respectively, for simplicity.


\subsubsection{Idea-2: A schedule 
without the last parcel.} As $\II = \{\DD, \PP\}$ is a contradicting instance with the least number of parcels, it means that for any other instance $\II'$ with fewer parcels that $\II$ is not a contradicting instance. 
In particular, this is true for $\II' = \{\DD, \PP\setminus\{P_n\}\}$. Intuitively, this implies $T(\LPT_{\II'}) \leq \phi$ and $T(\LPT) > \phi$,
and adding parcel $P_n$ causes the increase in time. We will now prove this rigorously.
 
\begin{definition}\label{LPT0}
    Let $\LPT_0$ be the schedule obtained by removing parcel $P_n$ from the schedule $\LPT$, i.e., $\LPT_0$ is a schedule such that $\LPT_0(D_i) = \LPT(D_i)\setminus \{P_n\}$ for all $i \in [m]$.     
\end{definition}

\begin{definition}\label{Li}
    Let $L_j(f)$ represent the total distance travelled by drone $D_j$ in schedule $f$. Then, $$L_j(f) = \sum_{i: P_i \in f(D_j)} \ell_i.$$
\end{definition}

\begin{definition}
    The completion time of a drone in schedule $f$ is the time taken by the drone to deliver all parcels assigned to it by the schedule $f$.
\end{definition}

\begin{lemma}\label{best_lemma}
    $\ell_n+ L_i(\LPT_0)=1 + L_i(\LPT_0) > \phi v_i$, for all $i \in [m]$.
\end{lemma}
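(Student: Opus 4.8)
The plan is to read this inequality as a statement about the state of the greedy schedule at the exact moment the last (and smallest) parcel $P_n$ is placed. Since $P_n$ is processed last, the loads of the drones just before its insertion are precisely those recorded by $\LPT_0$: drone $D_j$ has completion time $L_j(\LPT_0)/v_j$. Rewriting the target inequality as $\frac{1 + L_i(\LPT_0)}{v_i} > \phi$, I want to show that if \emph{any} drone were to receive $P_n$, its resulting completion time would exceed $\phi$.

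First I would identify $\LPT_0$ with the LPT schedule on the smaller instance $\II' = \{\DD, \PP \setminus \{P_n\}\}$. Because LPT processes parcels in sorted order and $P_n$ is handled last, deleting it does not change any earlier decision, so $\LPT_{\II'} = \LPT_0$. By the minimality of $\II$ (Idea-1), $\II'$ is not a contradicting instance, hence $T(\LPT_0) = T(\LPT_{\II'}) \leq \phi \cdot T(\OPT_{\II'})$; and since deleting a parcel cannot increase the optimal makespan, $T(\OPT_{\II'}) \leq T(\OPT_\II) = 1$. This yields the crucial bound $T(\LPT_0) \leq \phi$.

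Next I would invoke the greedy rule together with the normalization in \cref{simplify}, namely $\ell_n = 1$ and that every drone has battery life at least $\ell_n$, so that every drone is eligible to deliver $P_n$. Consequently LPT assigns $P_n$ to the drone $D_k$ that minimizes $\frac{L_j(\LPT_0) + 1}{v_j}$ over \emph{all} $j \in [m]$. After this assignment, each drone $j \neq k$ keeps its $\LPT_0$ completion time, which is at most $T(\LPT_0) \leq \phi$, whereas $T(\LPT) > \phi$ because $\II$ is contradicting and $T(\OPT_\II) = 1$. Hence the makespan of $\LPT$ must be realized at $D_k$, giving $\frac{L_k(\LPT_0)+1}{v_k} = T(\LPT) > \phi$. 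Finally, since $D_k$ minimizes $\frac{L_j(\LPT_0)+1}{v_j}$, every index $i$ satisfies $\frac{1 + L_i(\LPT_0)}{v_i} \geq \frac{1 + L_k(\LPT_0)}{v_k} > \phi$, which rearranges (using $v_i > 0$ and $\ell_n = 1$) to the claimed $\ell_n + L_i(\LPT_0) > \phi v_i$.

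I expect the only delicate points to be bookkeeping rather than deep: verifying that $\LPT_0$ really coincides with the LPT run on $\II'$ (so that minimality applies), and justifying $T(\OPT_{\II'}) \leq 1$ so the minimality bound combines cleanly with the normalization $T(\OPT_\II) = 1$. The remainder is the standard ``the maximum is forced onto the drone that just received the new job'' argument, which works here precisely because the eligibility constraint is vacuous for the smallest parcel $P_n$.
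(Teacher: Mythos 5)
Your proposal is correct and follows essentially the same route as the paper: identify $\LPT_0$ with the LPT run on $\II' = \{\DD, \PP\setminus\{P_n\}\}$, use minimality of $\II$ together with $T(\OPT_{\II'}) \leq T(\OPT_\II) = 1$ to get $T(\LPT_0) \leq \phi$, then use the greedy rule (with every drone eligible for $P_n$ by \cref{simplify}) to force $(1+L_i(\LPT_0))/v_i > \phi$ for all $i$. Your writeup even makes explicit a step the paper leaves terse, namely that the makespan of $\LPT$ must be realized at the drone receiving $P_n$ and that its minimizing choice propagates the bound to every other drone.
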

\begin{proof}
    We claim that $T(\LPT_0) \leq \phi$. Suppose not. Then consider the instance $\II' = \{\DD, \PP\setminus\{P_n\}\}$ obtained from the minimal instance $\II = \{\DD, \PP\}$. Observe that the schedule $\LPT_{\II'}$ 
    and  schedule $\LPT_0$ are the same (as the assignment of the first $n-1$ parcels is independent of the $n^{th}$ parcel). Therefore, $T(\LPT_{\II'}) > \phi$. Also, observe that $T(\OPT_{\II'}) \leq T(\OPT_\II) = 1$. This implies that $\ARI{\II'} > \phi$ which contradicts the fact that $\II$ is an instance with the least number of parcels such that $\ARI{\II} > \phi$.
    
    Now as $T(\LPT) > \phi$ and $T(\LPT_0) \leq \phi$, this means that only the drone which delivers $P_n$ has completion time greater than $\phi$. Also as the LPT algorithm assigns parcels to drones which have the least completion time, this implies that assigning $P_n$ to any drone $D_i$ in $\LPT_0$ would result in its completion time being greater than $\phi$. Note that $P_n$ can be assigned to any drone as all drones have battery life at least $\ell_n$ (from~\cref{simplify}).

    Therefore, $(L_i(\LPT_0) + \ell_n)/v_i > \phi$. As $\ell_n = 1$, it follows that  $L_i(\LPT_0) + 1 > \phi v_i$.
\end{proof}

\subsubsection{Idea-3: Classification of parcels.}
We classify parcels for which the drone travels a distance in the range $[1, \phi)$ as {\em small} jobs, $[\phi, 2)$ as {\em medium} jobs and $[2, \infty)$ as {\em large} jobs. We first define a rounding function $\Red(x)$ such that 
$$\Red(x) =
 \begin{cases} 1 & \text{ if } x \in [1, \phi)\\
                               1.5& \text{ if } x \in [\phi, 2)\\
                               \lfloor x \rfloor & \text{ if } x \in [2, \infty)
        \end{cases}$$
Note that $\Red$ is a function such that 
$$x \geq \Red(x) \geq x/\phi 
\text{ for all } x \geq 1$$

Define $L'_i(f) = \sum_{j: P_j \in f(D_i)} \Red(\ell_j)$.
The motivation for defining $R(x)$ is that the value of $L'_i(f)$ can only be integer multiples of 0.5 whereas $L_i(f)$ can take any arbitrary value depending on the input.

\subsubsection{Idea-4: Discretizing Distances.}
Ideally, we would like to show that each drone in $\LPT_0$ travels more distance than its counterpart in $\OPT$. This would show that the total distance travelled by drones in $\LPT_0$ is greater than that of drones in $\OPT$ which is a contradiction as fewer parcels have been delivered in $\LPT_0$ than in $\OPT$. Unfortunately, it turns out that $L_i(\LPT_0)$ can be lesser than $L_i(\OPT)$. Instead, we show that
$L'_i(\OPT) \leq L'_i(\LPT_0)$ for all $i \in [m]$ and arrive at a similar contradiction by analogous argument.

\begin{lemma}\label{phi}
    $L'_i(\LPT_0) + \phi - 1 > v_i$ for all $i \in [m]$
\end{lemma}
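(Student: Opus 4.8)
The plan is to derive this directly from \cref{best_lemma} by converting the bound on true distances $L_i(\LPT_0)$ into a bound on rounded distances $L'_i(\LPT_0)$, using the multiplicative guarantee built into $\Red$. \cref{best_lemma} gives $1 + L_i(\LPT_0) > \phi v_i$ for every $i \in [m]$, i.e. $L_i(\LPT_0) > \phi v_i - 1$. The goal inequality $L'_i(\LPT_0) + \phi - 1 > v_i$ is what one gets after dividing the \cref{best_lemma} bound by $\phi$ and replacing $L_i$ by $L'_i$, so the whole content is to justify that replacement.

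First I would invoke the pointwise property $\Red(x) \geq x/\phi$ stated just after the definition of $\Red$. Since all parcel distances satisfy $\ell_j \geq \ell_n = 1$ (by \cref{simplify}, $\ell_n$ is the smallest and equals $1$), the inequality $\Red(\ell_j) \geq \ell_j/\phi$ applies to every parcel $P_j$ assigned to $D_i$ in $\LPT_0$. Summing over all such parcels gives $L'_i(\LPT_0) = \sum_{j : P_j \in \LPT_0(D_i)} \Red(\ell_j) \geq \tfrac{1}{\phi}\sum_{j : P_j \in \LPT_0(D_i)} \ell_j = \tfrac{1}{\phi} L_i(\LPT_0)$. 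Combining this with \cref{best_lemma} yields $L'_i(\LPT_0) \geq \tfrac{1}{\phi} L_i(\LPT_0) > \tfrac{1}{\phi}(\phi v_i - 1) = v_i - \tfrac{1}{\phi}$.

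The final step uses the defining identity of the golden ratio, $\phi^2 = \phi + 1$, which rearranges to $\tfrac{1}{\phi} = \phi - 1$. Substituting this into the last display gives $L'_i(\LPT_0) > v_i - (\phi - 1)$, i.e. $L'_i(\LPT_0) + \phi - 1 > v_i$, as desired. I do not anticipate a real obstacle here — the argument is a short chain. The one place to be careful is confirming that the exponent $x \geq 1$ hypothesis of the $\Red$ bound holds for every summand (guaranteed by the normalization $\ell_n = 1$ in \cref{simplify}), and recognizing that the clean cancellation of constants is exactly why the thresholds $1$, $\phi$, $2$ in the definition of $\Red$ and the slack $\phi - 1$ in the statement were chosen to match the golden-ratio relation $1/\phi = \phi - 1$.
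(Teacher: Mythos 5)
Your proof is correct and takes essentially the same route as the paper: both combine \cref{best_lemma} with the pointwise bound $\Red(x) \geq x/\phi$ summed over the parcels assigned to $D_i$ (giving $L'_i(\LPT_0) \geq L_i(\LPT_0)/\phi$) and then use the golden-ratio identity $1/\phi = \phi - 1$. Your explicit check that every parcel distance satisfies $x \geq 1$ (via the normalization $\ell_n = 1$ from \cref{simplify}) is a detail the paper leaves implicit.
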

\begin{proof}
    From~\cref{best_lemma}, we get that
    \begin{align*} 
    v_i &< \frac{1 + L_i(\LPT_0)}{\phi} 
    \leq \frac{1}{\phi} + L'_i(\LPT_0)
    = \phi - 1 + L'_i(\LPT_0)
    \end{align*}
The second inequality and the third equality follow from $x/\phi \leq R(x)$ and $\phi$ being  the golden ratio, respectively.
\end{proof}

\begin{lemma}\label{half}
    If a drone $D_i$ has a medium job assigned to it, then $L'_i(\LPT_0) + 0.5 > v_i$
\end{lemma}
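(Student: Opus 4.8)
The plan is to sharpen the argument of \cref{phi}. In that lemma the bound $\Red(\ell_j) \geq \ell_j/\phi$ (equivalently $\ell_j \leq \phi\,\Red(\ell_j)$) was applied uniformly to every parcel on drone $D_i$, producing the additive constant $\phi - 1$. The key observation is that this bound is wasteful for a medium parcel: if $P_k$ is medium then $\Red(\ell_k) = 1.5$ while $\ell_k < 2$, so $\ell_k$ is strictly smaller than $\phi\,\Red(\ell_k) = 1.5\phi \approx 2.43$. Keeping the $\ell_k$ term untouched (using only $\ell_k < 2$) while rounding the remaining parcels recovers extra slack, and this is exactly what turns $\phi - 1 \approx 0.618$ into something below $0.5$.

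Concretely, I would start from \cref{best_lemma}, written as $\phi v_i < 1 + L_i(\LPT_0) = 1 + \sum_{j : P_j \in \LPT_0(D_i)} \ell_j$, and single out a medium parcel $P_k$ assigned to $D_i$. Splitting off the $\ell_k$ term and applying $\ell_j \leq \phi\,\Red(\ell_j)$ to every other parcel gives
\[
\phi v_i < 1 + \ell_k + \phi\bigl(L'_i(\LPT_0) - \Red(\ell_k)\bigr).
\]
Substituting $\Red(\ell_k) = 1.5$ and then $\ell_k < 2$ yields $\phi v_i < \phi\,L'_i(\LPT_0) + 3 - 1.5\phi$, that is, $v_i < L'_i(\LPT_0) + 3/\phi - 1.5$.

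It then remains to check the purely numerical fact $3/\phi - 1.5 < 0.5$. Using $1/\phi = \phi - 1$ this reads $3(\phi-1) < 2$, i.e.\ $\phi < 5/3$, which holds since $\phi \approx 1.618$; equivalently one verifies $(5/3)^2 = 25/9 > \phi + 1 = \phi^2$. This gives $v_i < L'_i(\LPT_0) + 0.5$, as required.

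I do not expect a genuine obstacle here; the lemma is a local refinement of \cref{phi} rather than a new idea. The only points demanding care are bookkeeping ones: isolating exactly one medium parcel before rounding the rest, remembering that \cref{best_lemma} is a strict inequality (so the final bound is strict), and noting that the medium parcels of $\LPT_0$ and $\LPT$ coincide because the removed parcel $P_n$ has distance $\ell_n = 1 < \phi$ and is therefore not medium. It is also worth observing that this argument in fact proves the stronger constant $3/\phi - 1.5 \approx 0.35$, so the stated $0.5$ leaves room to spare, presumably chosen to match the half-integer values taken by $L'$.
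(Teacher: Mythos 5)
Your proof is correct and takes essentially the same route as the paper's: both isolate the single medium parcel, apply the rounding bound $\ell_j \leq \phi\,\Red(\ell_j)$ to all the remaining parcels on $D_i$, and invoke \cref{best_lemma}. The only difference is in bookkeeping---you keep $\Red(\ell_k)=1.5$ exact and use the upper endpoint $\ell_k < 2$, whereas the paper weakens $1.5 > y - 0.5$ and then uses the lower endpoint $y \geq \phi$, which is why you land on the slightly stronger (but unneeded) constant $3/\phi - 1.5 \approx 0.35$ in place of $0.5$.
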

\begin{proof} 
    Let a medium-sized job of size $y$ be assigned to drone $D_i$. Then,
    \begin{align*}
        L'_i(\LPT_0) &= R(y) + \sum_{z\in \LPT_0(D_i)\setminus\{y\}} R(z)
        \intertext{Since $y\in[\phi, 2)$, we get $\Red(y) = 1.5 > y - 0.5$, and hence,}
        L'_i(\LPT_0)&> y -0.5 + \sum_{z\in \LPT_0(D_i)\setminus\{y\}} R(z) \geq y - 0.5 + \frac{L_i(\LPT) - y}{\phi}
        \intertext{Using~\cref{best_lemma} and $y \geq \phi$, we get}
        L'_i(\LPT_0)&> y - 0.5 + \frac{\phi v_i - 1 - y}{\phi} = y(1- \frac{1}{\phi}) - 0.5 - \frac{1}{\phi} + v_i
        \\ &
        \geq \phi - 1 - \frac{1}{\phi} + v_i - 0.5 = v_i - 0.5
    \end{align*}
\end{proof}

\begin{lemma}\label{main}
    $L'_i(\OPT) \leq L'_i(\LPT_0)$ for all $i \in [m]$
\end{lemma}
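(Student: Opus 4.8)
The plan is to fix a single drone $D_i$ and squeeze $L'_i(\OPT)$ and $L'_i(\LPT_0)$ between two a priori bounds. On the optimum side, since $T(\OPT)=1$ we have $L_i(\OPT)\le v_i$, and because $\Red(x)\le x$ this upgrades to $L'_i(\OPT)\le L_i(\OPT)\le v_i$. On the algorithm side, \cref{phi} already supplies $L'_i(\LPT_0) > v_i-(\phi-1)$. The structural fact I would exploit is that $\Red$ takes values only in $\{1,\,1.5\}\cup\{2,3,4,\dots\}$, so both $L'_i(\OPT)$ and $L'_i(\LPT_0)$ are integer multiples of $\tfrac12$; hence it suffices to prove the strict inequality $L'_i(\LPT_0) > L'_i(\OPT)-\tfrac12$, which immediately forces $L'_i(\LPT_0)\ge L'_i(\OPT)$.

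Here is the main obstacle: combining the two bounds directly yields only $L'_i(\LPT_0) > v_i-(\phi-1)\ge L'_i(\OPT)-(\phi-1)$, and since $\phi-1\approx0.618>\tfrac12$ the gap overshoots the target by exactly $\phi-\tfrac32\approx0.118$. I would recover this deficit by a case analysis on where \emph{medium} jobs (those with $\ell_j\in[\phi,2)$, the only ones for which $\Red$ is non-integral) sit. If $\LPT_0(D_i)$ contains a medium job, then \cref{half} replaces the weak bound by the stronger $L'_i(\LPT_0) > v_i-\tfrac12\ge L'_i(\OPT)-\tfrac12$, and the claim follows at once.

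Otherwise $L'_i(\LPT_0)$ is an integer, and I would split on $\OPT$. If $\OPT(D_i)$ also carries no medium job, then $L'_i(\OPT)$ is an integer too; now \cref{phi} leaves a gap strictly smaller than $1$ between two integers, so integrality alone gives $L'_i(\LPT_0)\ge L'_i(\OPT)$. If instead $\OPT(D_i)$ carries a medium job of size $y\in[\phi,2)$, I would tighten the optimum side using the rounding loss: that single job loses $y-\Red(y)=y-1.5\ge\phi-\tfrac32$, so $L'_i(\OPT)\le L_i(\OPT)-(\phi-\tfrac32)\le v_i-(\phi-\tfrac32)$, and feeding this into \cref{phi} yields $L'_i(\LPT_0) > v_i-(\phi-1)\ge L'_i(\OPT)-\tfrac12$, once more enough.

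I expect the delicate points to be (a) the bookkeeping that $L'_i(f)$ is always a multiple of $\tfrac12$ and is an integer exactly when $D_i$ receives an even number of medium jobs, which is what legitimises both the ``$>-\tfrac12\Rightarrow\,\ge$'' step and the integrality step; and (b) checking that the three cases — a medium job in $\LPT_0(D_i)$; no medium job in either schedule; a medium job only in $\OPT(D_i)$ — are genuinely exhaustive. The arithmetic ultimately hinges on the single identity $(\phi-1)-\tfrac12=\phi-\tfrac32$: the deficit left by \cref{phi} is $\phi-1$, the target gap is $\tfrac12$, and both recovery mechanisms (strengthening $L'_i(\LPT_0)$ via \cref{half}, or strengthening $L'_i(\OPT)$ via the rounding loss of one medium job) restore precisely the missing $\phi-\tfrac32$. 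I would keep every constant symbolic rather than numeric so that this fit stays transparent.
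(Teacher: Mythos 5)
Your proposal is correct and takes essentially the same route as the paper: both arguments rest on the half-integrality of $L'$, on \cref{phi}, on \cref{half} when $\LPT_0(D_i)$ holds a medium job, and on the rounding loss $\phi-\tfrac{3}{2}$ of a medium job in $\OPT(D_i)$. The only difference is presentational — the paper argues by contradiction and pins the difference to exactly $0.5$ before splitting on integrality, while you directly prove the equivalent strict bound $L'_i(\LPT_0) > L'_i(\OPT)-\tfrac{1}{2}$ with a case split on where medium jobs occur.
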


\begin{proof}
    Towards a contradiction, let us assume that $L'_j(\OPT) > L'_j(\LPT_0)$ for some drone $D_j$. Observe that $L'_j(\OPT) \leq L_j(\OPT) \leq v_j$ as $T(\OPT) = 1$. Using this and ~\cref{phi} we get,
    \begin{align*}
        L'_j(\LPT_0) < L'_j(\OPT) &\leq v_j < \phi - 1 + L'_j(\LPT_0)   
    \end{align*}
    As $L'_j(\LPT_0)$ and $L'_j(\OPT)$ are both integer multiples of $0.5$, the only value which satisfies the inequality is, $L'_j(\OPT) = L'_j(\LPT_0) + 0.5$. Let $L'_j(\LPT_0) = k/2$ where $k$ is an integer.

    \noindent\textbf{Case 1:} $L'_j(\LPT_0) = k/2$ is not an integer.\\
    This can only happen if $D_j$ contains at least one medium job in $\LPT_0$ (as small and large sized jobs have integral values and cannot add to give a non-integer). But then by~\cref{half}, we get $L'_j(\OPT) = L'_j(\LPT_0) + 0.5 > v_j$, which is a contradiction as $L'_j(\OPT) \leq v_j$.

    \noindent\textbf{Case 2:} $L'_j(\LPT_0) = k/2$ is an integer.\\
    This means that $L'_j(\OPT) = (k+1)/2$ is not an integer and hence, $D_j$ has at least one medium job assigned to it in $\OPT$. Therefore, by definition of the $\Red(\cdot)$ function, we get 
    $L_j(\OPT) \geq L'_j(\OPT) + \phi - 1.5 = L'_j(\LPT_0) + \phi - 1$. Using~\cref{phi}, we get $L_j(\OPT) > v_j$, which is a contradiction as $L_j(\OPT) \leq v_j$.
\end{proof}

Using~\cref{main}, we get that $$\sum_{i = 1}^m L'_i(\OPT) \leq \sum_{i = 1}^m L'_i(\LPT_0).$$ Observe that $\sum_{i=1}^m L'_i(\OPT) = \sum_1^{n} \Red(\ell_i)$ and $\sum_{i=1}^m L'_i(\LPT_0) = \sum_1^{n-1} \Red(\ell_i)$. Substituting this, we get
$$\sum_1^{n} \Red(\ell_i) \leq \sum_1^{n-1} \Red(\ell_i),$$
which is a contradiction. Therefore, our initial assumption must be wrong, implying that $T(Alg) \leq \phi$.

\subsection{How much can the above analysis of LPT be improved?}
Recall that the special case of DWP when all drones have the same battery life ($d_i=d_j \geq \max_k (\ell_k)$ for all $i,j \in [m]$) is equivalent to USP. It is known that LPT cannot give an approximation better than $1.54$ for USP \cite{Kovacs10}. Therefore, the LPT algorithm can also not give an approximation ratio better than $1.54$ for DWP.  

\section{Future work}
A couple of immediate open problems are 
the following: 
\begin{enumerate}
\item Can the implementation of the 
    LPT heuristic be done in optimal time, i.e. $O((n+m)\cdot(\log m +\log n)$ time? We 
    believe that it should be possible,
    since the jobs are known upfront, i.e.,
    it is actually an offline problem.
    \item Can the approximation ratio of the LPT heuristic be improved for DWP from $\phi$?   
\end{enumerate}
In general, delivering parcels from 
the warehouse using drones is a rich 
source of scheduling and vehicle routing
problems. We 
mention two general directions:
\begin{enumerate}
    \item  As a concrete 
    setting, consider a warehouse which
    has a truck and a drone, both of 
    which operate independently. As in the paper, the goal is to assign 
    parcels to the truck and the drone
    so that time taken to deliver is minimized. 
    The parcels will be delivered by the drone using 
    the same model as in this paper, whereas the 
    truck will deliver using the traditional technique.
    A generalized version 
    of the problem would involve 
    multiple trucks and drones.
    \item Some companies might have 
    multiple warehouses and as such, 
    a parcel can be delivered by any
    of the warehouses. As a concrete
    setting, consider the generalization of  the DWP 
    studied in this paper to the setting
    where there are {\em two} warehouses
    at different locations.
    The goal is to perform a two-level
    partition: first partition the 
    parcels among the warehouses and then 
    partition them among the drones. 
    The goal is to deliver all the parcels
    as quickly as possible.
\end{enumerate}

\bibliographystyle{plainurl}
\bibliography{references}

\end{document}